\newtheorem{theorem}{Theorem}
\newtheorem{corollary}[theorem]{Corollary}
\newtheorem{lemma}[theorem]{Lemma}
\newtheorem{proposition}[theorem]{Proposition}
\theoremstyle{remark}
\newcommand{\ea}[1]{\begin{align}#1\end{align}}
\newcommand{\la}{\langle}
\newcommand{\ra}{\rangle}
\newcommand{\mcl}{\mathcal}
\newcommand{\mbb}{\mathbb}
\newcommand{\mrm}{\mathrm}
\newcommand{\bmt}{\begin{pmatrix}}
\newcommand{\emt}{\end{pmatrix}}
\DeclareMathOperator{\tr}{Tr}
\DeclareMathOperator{\ebr}{ebr}
\DeclareMathOperator{\len}{len}
\DeclareMathOperator{\ch}{\mcl{C}}
\DeclareMathOperator{\rank}{Rank}
\numberwithin{equation}{section}
\begin{document}

\title{On the continuous Zauner conjecture}

\author{Danylo Yakymenko}
\email{danylo.yakymenko@gmail.com}  
\address{Institute of mathematics of the National Academy of Sciences of Ukraine, Kyiv, Ukraine}

\begin{abstract}
In a recent paper by S.Pandey, V.Paulsen, J.Prakash, and M.Rahaman, the authors studied the entanglement breaking quantum channels $\Phi_t:\mbb{C}^{d\times d} \to \mbb{C}^{d \times d}$ for $t \in [-\frac{1}{d^2-1}, \frac{1}{d+1}]$ defined by $\Phi_t(X) = tX+ (1-t)\tr(X) \frac{1}{d}I$. They proved that Zauner's conjecture is equivalent to the statement that entanglement breaking rank of $\Phi_{\frac{1}{d+1}}$ is $d^2$. The authors made the extended conjecture that $\ebr(\Phi_t)=d^2$ for every $t \in [0, \frac{1}{d+1}]$ and proved it in dimensions 2 and 3. 

In this paper we prove that for any $t \in [-\frac{1}{d^2-1}, \frac{1}{d+1}] \setminus\{0\}$ the equality $\ebr(\Phi_t)=d^2$ is equivalent to the existence of a pair of informationally complete unit norm tight frames $\{|x_i\ra\}_{i=1}^{d^2}, \{|y_i\ra\}_{i=1}^{d^2}$  in $\mbb{C}^d $ which are mutually unbiased in a certain sense. That is, for any $i\neq j$ it holds that $|\la x_i|y_j\ra|^2 = \frac{1-t}{d}$ and $|\la x_i|y_i\ra|^2 = \frac{t(d^2-1)+1}{d}$ (also it follows that $|\la x_i|x_j\ra\la y_i|y_j\ra|=|t|$). 

Though, our numerical searches for solutions were not successful in dimensions 4 and 5 for values of $t$ other than $0$ or $\frac{1}{d+1}$. 
 
\end{abstract}

\maketitle
\allowdisplaybreaks

\section{Introduction}

Recall that a \textit{tight frame} \cite{Waldron} is a set of vectors $\{|x_i\ra\}_{i=1}^n \in \mbb{C}^{d}$ such that
\ea{\label{eq:int-1}
	\sum_{i=1}^{n} |x_i \ra\la x_i | = cI_d, 
}
for some real $c \ge 0$. 
A frame is called \textit{unit norm} if $|x_i\ra$ are unit vectors. 
In this paper we consider only unit norm frames. By calculating the trace of Eq. \ref{eq:int-1} one can see that $c = \frac{n}{d}$ for a tight frame. 
A frame is \textit{equiangular} if $|\la x_i | x_j \ra|$ is constant for any $i \neq j$. 
An equiangular tight frame in $\mbb{C}^{d}$ can't have more than $d^2$ elements. 
In the extreme case it's called a \textit{maximal ETF}, 
and one can deduce that $|\la x_i | x_j \ra|^2 = \frac{1}{d+1}$ for any $i \neq j$ in this case. 
We'll also call such a frame by acronym \textit{SIC}, 
since the set $\{ \frac{1}{d} |x_i\ra\la x_i | \}_{i=1}^{d^2}$ is known as 
a symmetric, informationally complete, positive operator-valued measure (SIC-POVM) \cite{Renes}.
The informational completeness means that the linear span of projectors $|x_i \ra\la x_i |$ is the set of all complex $d \times d$ matrices. 

Another distinctive property of a SIC is that it forms a \textit{minimal projective 2-design}, that is
\ea{\label{eq:int-2}
	\frac{1}{d^2}\sum_{i=1}^{d^2} \Big(|x_i \ra\la x_i |\Big)^{\otimes 2}  
	= \int_{\mbb{C}P^{d-1}} \Big(|\phi \ra\la \phi |\Big)^{\otimes 2} \mathrm{d}\mu(\phi) 
	= \frac{2}{d(d+1)}\Pi_{sym}, 
}
where $\mbb{C}P^{d-1}$ is the corresponding projective space of $\mbb{C}^d$ (i.e. lines passing through the origin), 
$\mu$ denotes the unique unitarily-invariant probability measure on $\mbb{C}P^{d-1}$ 
induced by the Haar measure on the group of unitary matrices $\mcl{U}(d)$, 
and $\Pi_{sym}$ is the projector onto the symmetric subspace of $(\mbb{C}^d)^{\otimes 2}$
(see \cite{Scott} for a review). 

It's remarkable that such an optimal and symmetric structure presumably exists in every dimension $d$, but the general proof is still elusive. The existence of SICs for every $d$ is known as Zauner's weak conjecture, due to its first major study in \cite{Zauner}. There is an extensive literature devoted to the research of this structure and related questions, see, for example, \cite{STFF, SG, Aligned, Yard, Bengtsson, Kopp, Zhu, Scott2, Quaternions}. For a survey we refer to \cite{Fuchs}. 

\newpage

In a recent paper by S.Pandey, V.Paulsen, J.Prakash, and M.Rahaman \cite{PPPR}, the authors studied the entanglement breaking quantum channels $\Phi_t:\mbb{C}^{d\times d} \to \mbb{C}^{d \times d}$ for $t \in [-\frac{1}{d^2-1}, \frac{1}{d+1}]$ defined by 
\ea{
\Phi_t(X) = tX+ (1-t)\tr(X) \frac{1}{d}I_d
}
(we review entanglement breaking maps in Section \ref{sec:ebr}). 
They proved that Zauner's weak conjecture is equivalent to the statement that entanglement breaking rank ($\ebr$ for short) of $\Phi_{\frac{1}{d+1}}$ is $d^2$. On the other hand, $\ebr(\Phi_0)=d^2$ is true in any dimension. The authors made the extended conjecture, which we refer to as the \textit{PPPR conjecture}, that $\ebr(\Phi_t)=d^2$ for every $t \in [0, \frac{1}{d+1}]$ and proved it in dimensions 2 and 3. It seems to be true for $t \in [-\frac{1}{d^2-1},0)$ in these dimensions as well, though we are more interested in the case of general dimension $d$. 

In this paper our main results are contained in Section \ref{sec:muf}, where we prove that for any $t \in [-\frac{1}{d^2-1}, \frac{1}{d+1}] \setminus\{0\}$ the equality $\ebr(\Phi_t)=d^2$ is equivalent to the existence of a pair of informationally complete unit norm tight frames $\{|x_i\ra\}_{i=1}^{d^2}, \{|y_i\ra\}_{i=1}^{d^2}$  in $\mbb{C}^d $ which are mutually unbiased in a certain sense. 
That is, for any $i\neq j$ it holds that 
$$
	|\la x_i|y_j\ra|^2 = \frac{1-t}{d}, 
$$
$$	
	|\la x_i|y_i\ra|^2 = \frac{t(d^2-1)+1}{d},
$$
\ea{
	|\la x_i|x_j\ra\la y_i|y_j\ra|=|t|.
} 
We'll call such a pair of frames as \textit{MUF}. 
One can readily see that for $t=\frac{1}{d+1}$ the two frames should coincide (up to phase factors) and satisfy SIC relations. 
Moreover, the results of \cite{PPPR} imply that there exists a family of MUFs which is continuous over $t$ on $[0, \frac{1}{d+1}]$ in dimensions 2 and 3 (with a nuance at $t=0$). Therefore, the PPPR conjecture may be seen as a continuous extension of Zauner's conjecture. 

In Section \ref{sec:t0} we investigate the case where $t=0$. 

Even though the existence of such pairs of frames may look plausible, our numerical searches were not successful in dimensions 4 and 5 for values of $t$ other than $0$ or $\frac{1}{d+1}$. 

\section{Entanglement breaking maps}
\label{sec:ebr}

A linear map $\Phi: \mbb{C}^{d\times d} \to \mbb{C}^{d \times d}$ is called a \textit{quantum channel} if it's trace preserving and completely positive \cite{Watrous}, 
that is, $\tr(\Phi(X))=\tr(X)$ for any matrix $X$, and $\mcl{I}_n \otimes \Phi$ is a positive map for any $n$, where $\mcl{I}_n$ is the identity map on $\mbb{C}^{n\times n}$. 

Consider the rank 1 projector in the space of matrices $\mbb{C}^{d \times d} \otimes \mbb{C}^{d \times d}$:
\ea{
	|\beta\ra\la\beta| 
	= \frac{1}{d}\Big(\sum_{i=0}^{d-1} |i\ra|i\ra\Big)\Big(\sum_{j=0}^{d-1} \la j|\la j|\Big)
	= \frac{1}{d}\sum_{i=0}^{d-1} \sum_{j=0}^{d-1} E_{ij} \otimes E_{ij},
}
where $\{|i\ra\}_{i=0}^{d-1}$ is the standard orthonormal basis and $E_{ij} = |i\ra\la j|$ are matrix units.

The image of it under the $\mcl{I}_d \otimes \Phi$ map is called the \textit{Choi matrix} of $\Phi$:
\ea{
	\mcl{C}(\Phi) = (\mcl{I}_d \otimes \Phi)(|\beta\ra\la\beta|) = \frac{1}{d}\sum_{ij} E_{ij} \otimes \Phi(E_{ij}).
} 

By Choi's theorem \cite{Choi}, $\Phi$ is completely positive if and only if $\mcl{C}(\Phi)$ is positive semidefinite. 
The factor $\frac{1}{d}$ ensures that $\tr(\mcl{C}(\Phi))=1$ for a trace preserving $\Phi$, although it's common to define Choi matrices without it.
The Choi map $\mcl{C}$ is of great importance in quantum information theory because it's linear and invertible, 
i.e. $\mcl{C}(\Phi)$ completely describes $\Phi$. 

For example, for the map $\Phi_0(X) = \frac{1}{d}\tr(X)I_d$ we have $\mcl{C}_0 = \mcl{C}(\Phi_0) = \frac{1}{d^2}I_{d^2}$, 
and for the map $\Phi_1(X) = X$ we have $\mcl{C}_1 = \mcl{C}(\Phi_1) = |\beta\ra\la\beta|$. 
One can see that the linear combination $t\mcl{C}_1 + (1-t)\mcl{C}_0$ is positive semidefinite for $t \in [-\frac{1}{d^2-1},1]$, 
thus $\Phi_t = t\Phi_1 + (1-t)\Phi_0$ are legit quantum channels on this interval. 
They're known as \textit{quantum depolarizing channels}. 

A particular naturally occurring class of quantum channels is known as \textit{entanglement breaking channels} \cite{ebc}. 
Those are such $\Phi$ that for any state $\rho \in \mbb{C}^{n \times n} \otimes \mbb{C}^{d \times d}$ the image $(\mcl{I}_n \otimes \Phi)(\rho)$ is always disentangled (separable). 
Recall that a state $\rho$ on a bipartite system is called \textit{separable} if it can be written as a convex combination of product states, that is 
$$
\rho = \sum_{i=1}^m \lambda_i \rho_i^{(1)} \otimes \rho_i^{(2)},
$$
where $\lambda_i \ge 0$, $\sum_i \lambda_i = 1$, and $\rho_i^{(1)}, \rho_i^{(2)}$ are states on the corresponding subsystems. 
On can see that if such a decomposition exists, then there is also a decomposition where each $\rho_i^{(1)}, \rho_i^{(2)}$ are pure states, i.e. rank 1 projectors. A separable state can have many different pure decompositions. 
The minimum number $m$ of the summands that a pure decomposition can have is called the \textit{length of separability} of $\rho$, 
which we denote by $\len(\rho)$. 
For example, it's easy to see that $\len(\mcl{C}_0) = d^2$ where for the decomposition one can take 
$\frac{1}{d^2}I_{d^2} = \frac{1}{d^2} \sum_{ij} |i\ra\la i| \otimes |j\ra\la j|$. 

It turns out that $\Phi$ is entanglement breaking if it "breaks" just the single element $|\beta\ra\la\beta|$ (it's maximally entangled, though). In other words, $\Phi$ is entanglement breaking if and only if its Choi matrix $\mcl{C}(\Phi)$ is separable. 
In such a case, what was defined in \cite{PPPR} as \textit{entanglement breaking rank} of $\Phi$ equals to the length of separability of $\mcl{C}(\Phi)$, 
that is, 
$$
\ebr(\Phi) = \len(\mcl{C}(\Phi)),
$$ 
so that these two notions can be used interchangeably. 
The original definition uses the notion of Kraus decompositions, 
but in this paper we only concern Choi matrices.

One can recognise that the matrices $\mcl{C}_t = t\mcl{C}_1 + (1-t)\mcl{C}_0$ are also known as \textit{isotropic states} (see \cite{Watrous}, Example 6.10, Example 7.25). 
It's known that $\mcl{C}_t$ are separable for $t \in [-\frac{1}{d^2-1}, \frac{1}{d+1}]$ and entangled otherwise. 
In the separable case their partial transpose are separable Werner states (clearly, the partial transpose doesn't affect separability and the length of separability remains the same). 

The calculation of partial transpose gives
$$
(\mcl{I} \otimes T)(\mcl{C}_0) = \frac{1}{d^2}I_{d^2},
$$
\ea{
	(\mcl{I} \otimes T)(\mcl{C}_1) = \frac{1}{d} \sum_{ij} E_{ij} \otimes E_{ij}^T = \frac{1}{d} U_{SW}, 
}
where $U_{SW}$ is the swap operator on $\mbb{C}^{d} \otimes \mbb{C}^{d}$ 
and $T(X)=X^T$ is the transpose operator on $\mbb{C}^{d \times d}$. 
Note that $\Pi_{sym} = \frac{1}{2}(I_{d^2} + U_{SW})$ is the projector onto the symmetric subspace of $\mbb{C}^{d} \otimes \mbb{C}^{d}$, 
while $\Pi_{asym} = \frac{1}{2}(I_{d^2} - U_{SW})$ is the projector onto the asymmetric subspace. 
So that, in particular, we have
$$
	(\mcl{I} \otimes T)(\mcl{C}_{\frac{1}{d+1}}) 
	= \frac{1}{d(d+1)} U_{SW} + \frac{1}{d(d+1)} I_{d^2} 
	= \frac{2}{d(d+1)} \Pi_{sym}, 
$$
\ea{
	(\mcl{I} \otimes T)(\mcl{C}_{\frac{-1}{d^2-1}}) 
	= \frac{-1}{d(d^2-1)} U_{SW} + \frac{1}{d^2-1} I_{d^2} 
	= \frac{1}{d(d+1)}\Pi_{sym} + \frac{1}{d(d-1)} \Pi_{asym}. 
}

Less is known about their separability length. Clearly, it can't be smaller than the rank of matrix $\mcl{C}_t$. 
It's not hard to see that either $\rank(\mcl{C}_t) = d^2$ or $\rank((\mcl{I} \otimes T)(\mcl{C}_t)) = d^2$
for any $t$, thus $\len(\mcl{C}_t) \ge d^2$. 
On the other hand, from the Caratheodory theorem it can be deduced that $\len(\mcl{C}_t) \le d^4$.
For $t=\frac{1}{d+1}$ it gives a bit better boundary, $\len(\mcl{C}_{\frac{1}{d+1}}) \le d^2(d+1)^2/4$. 
See \cite{Mixon}, Corollary 5, for a collection of other specific boundaries. 
For example, it's known that a maximal set of $d+1$ mutually unbiased bases (MUB for short, \cite{mub}) in $\mbb{C}^d$ is a projective 2-design \cite{Rotteler}. This means that the sum of the tensor squares of MUB elements gives a pure decomposition of $\Pi_{sym}$ (normalized approprietly). It follows that $\len(\mcl{C}_{\frac{1}{d+1}}) \le d(d+1)$ if $d$ is a prime power, since a maximal set of MUBs exists in this case. 

In turn, SICs are minimal projective 2-designs.   
It follows from Eq. \ref{eq:int-2} that $\ebr(\Phi_{\frac{1}{d+1}}) = \len\big(\mcl{C}_{\frac{1}{d+1}}\big) = \len\big((\mcl{I} \otimes T)(\mcl{C}_{\frac{1}{d+1}})\big) = d^2$ whenever SIC exists in dimension $d$. 
In \cite{PPPR}, Theorem III.2, the authors proved the backward implication. 
That is, if $\ebr(\Phi_{\frac{1}{d+1}}) = d^2$ then a SIC exists. 
Therefore, Zauner's weak conjecture is equivalent to the statement that $\ebr(\Phi_{\frac{1}{d+1}}) = d^2$. 
In Section \ref{sec:muf} we show that a similar equivalence can be proved for $\Phi_t$ in general. 

\section{Mutually unbiased frames}
\label{sec:muf}
In this section we present our main results about the relation between entanglement breaking rank of considered maps and the existence of a pair of mutually unbiased frames. 

\begin{theorem}
\label{th:muf-1}
Let $t \in [-\frac{1}{d^2-1}, \frac{1}{d+1}] \setminus \{0\}$ and suppose that $\len\big((\mcl{I} \otimes T)(\mcl{C}_t)\big) = d^2$. That is, 
\ea{\label{eq:muf-1}
\sum_{i=1}^{d^2} w_i |x_i\ra\la x_i| \otimes |y_i\ra\la y_i| 
= \frac{2t}{d} \Pi_{sym} + \frac{1-t(d+1)}{d^2}I_{d^2} 
= \frac{t}{d}U_{SW} +\frac{1-t}{d^2}I_{d^2} 
}
for some real weights $w_i \ge 0$ that sum to $1$ and some unit vectors $\{|x_i\ra\}_{i=1}^{d^2}, \{|y_i\ra\}_{i=1}^{d^2} \subset \mbb{C}^d$. 

Then all weights are equal, i.e. $w_i = \frac{1}{d^2}$, the frames $\{|x_i\ra\}$, $\{|y_i\ra\}$ are tight and informationally complete, 
and for any $i \neq j$ we have 
\ea{\label{eq:muf-2}
|\la x_i|y_j\ra|^2 = \frac{1-t}{d}, \quad \la x_i|y_i\ra = \sqrt{\frac{t(d^2-1)+1}{d}}, 
}
\ea{\label{eq:muf-3}
\la x_j|x_i\ra\la y_i|y_j\ra=t,
}
under a suitable choice of phases of $\{|x_i\ra\}$, $\{|y_i\ra\}$. 

\end{theorem}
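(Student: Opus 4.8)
The plan is to peel off the two-system structure with partial traces, then use the fact that for $t \neq 0$ the target $R := \frac{t}{d}U_{SW}+\frac{1-t}{d^2}I_{d^2}$ has (for all but one value of $t$) full rank, which rigidifies the decomposition, and only at the end to read off the individual overlaps. Writing $M$ for the left side of \eqref{eq:muf-1}, I would first compute $\tr_2$. Since $\tr_2(U_{SW})=I_d$ and $\tr_2(I_{d^2})=dI_d$, this gives $\sum_i w_i|x_i\ra\la x_i| = \frac{1}{d}I_d$ and, symmetrically, $\sum_i w_i|y_i\ra\la y_i| = \frac{1}{d}I_d$, so both (weighted) frames are tight. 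Sandwiching \eqref{eq:muf-1} between $I\otimes\la y_k|$ and $I\otimes|y_k\ra$ yields the sharper identity $\sum_i w_i|\la y_k|y_i\ra|^2\,|x_i\ra\la x_i| = \frac{t}{d}|y_k\ra\la y_k|+\frac{1-t}{d^2}I_d$, and the mirror version for the $x$'s. Because $t\neq 0$ and $I_d$ lies in each span, these identities force $|y_k\ra\la y_k|\in\mrm{span}\{|x_i\ra\la x_i|\}=:W$ and $|x_k\ra\la x_k|\in\mrm{span}\{|y_i\ra\la y_i|\}$; hence every term of $M$ lies in $W\otimes W$, so $R\in W\otimes W$, so $U_{SW}=\frac{d}{t}(R-\frac{1-t}{d^2}I_{d^2})\in W\otimes W$. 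A short pairing argument shows $U_{SW}\in W\otimes W$ is impossible unless $W=\mbb{C}^{d\times d}$; this already gives informational completeness of both frames.

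Next I would set up the rigidity. The operator $R$ has eigenvalue $\lambda_+=\frac{t(d-1)+1}{d^2}$ on the symmetric subspace and $\lambda_-=\frac{1-t(d+1)}{d^2}$ on the antisymmetric one; both are strictly positive for $t\in[-\frac{1}{d^2-1},\frac{1}{d+1})\setminus\{0\}$, so $\rank R=d^2$. (The single remaining value $t=\frac{1}{d+1}$ is the SIC case of \cite{PPPR}, where $\lambda_-=0$, the two frames coincide, and the stated relations degenerate to the SIC relations, so I treat it separately.) Since $M=R$ is a sum of exactly $d^2=\rank R$ rank-one terms $w_i|v_i\ra\la v_i|$ with $|v_i\ra=|x_i\ra\otimes|y_i\ra$, all $w_i>0$ and the $|v_i\ra$ form a basis. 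Writing $R^{-1}=\alpha I_{d^2}+\beta U_{SW}$ (legitimate as $U_{SW}^2=I$) and passing to the dual basis gives the master relation
\[ \alpha\,\la v_i|v_j\ra + \beta\,\la v_i|U_{SW}|v_j\ra = \frac{\delta_{ij}}{w_i}, \]
i.e.\ off the diagonal $\la x_i|x_j\ra\la y_i|y_j\ra$ and $\la x_i|y_j\ra\la y_i|x_j\ra$ are proportional with ratio $-\alpha/\beta=\frac{1-t}{td}$, while on the diagonal $\frac{1}{w_i}=\alpha+\beta|\la x_i|y_i\ra|^2$. Thus equal weights is \emph{equivalent} to the diagonal overlaps $f_i:=|\la x_i|y_i\ra|^2$ being constant, and $\tr(MU_{SW})=\tr(RU_{SW})$ already pins the weighted average $\sum_i w_i f_i=\frac{t(d^2-1)+1}{d}$.

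The crux — and where I expect the real difficulty — is upgrading this weighted average to pointwise equality $f_i\equiv\frac{t(d^2-1)+1}{d}$ (hence $w_i\equiv\frac{1}{d^2}$). I expect plain moment identities to be insufficient, since any $\tr(M(A\otimes B))$ only constrains weighted averages of quantities linear in $|v_i\ra\la v_i|$ and cannot access $\sum_i w_i f_i^2$; the constancy must genuinely exploit that the $|v_i\ra$ are \emph{product} vectors. Minimality forces the weighted Gram matrix $\tilde G=[\sqrt{w_iw_j}\,\la v_i|v_j\ra]$ to have spectrum $\{\lambda_+,\lambda_-\}$ only, equivalently $\sqrt{w_i}\,|x_i\ra\otimes|y_i\ra=R^{1/2}|q_i\ra$ for an orthonormal basis $\{|q_i\ra\}$, with $R^{1/2}=\sqrt{\lambda_+}\,\Pi_{sym}+\sqrt{\lambda_-}\,\Pi_{asym}$; so each $R^{1/2}|q_i\ra$ must be a decomposable (rank-one) tensor, and $w_i$ is governed by $s_i:=\la q_i|\Pi_{sym}|q_i\ra$, whose average is forced to be $\frac{d+1}{2d}$. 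I would combine this decomposability constraint with the off-diagonal master relation and the second-moment identities of the first step to force $s_i$ (equivalently $f_i$) constant. This interplay between the rank-one constraint and the two-eigenvalue spectral structure is the main obstacle.

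Finally I would record the resulting constants. Once $w_i=\frac{1}{d^2}$ and $f_i=\frac{t(d^2-1)+1}{d}$, the cross tight-frame identity $\sum_j w_j|\la x_i|y_j\ra|^2=\frac1d$ gives $\sum_{j\neq i}|\la x_i|y_j\ra|^2=\frac{(d^2-1)(1-t)}{d}$, and the same rigidity established above forces each summand to the common value $|\la x_i|y_j\ra|^2=\frac{1-t}{d}$ of \eqref{eq:muf-2}; the off-diagonal master relation then yields $|\la x_i|x_j\ra\la y_i|y_j\ra|=|t|$, which is \eqref{eq:muf-3} up to phase. The last step is purely normalization: rephasing each $|x_i\ra$ and $|y_i\ra$ turns the modulus statements into the signed equalities $\la x_i|y_i\ra=\sqrt{\frac{t(d^2-1)+1}{d}}$ and $\la x_j|x_i\ra\la y_i|y_j\ra=t$, completing the proof.
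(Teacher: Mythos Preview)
Your proposal has a real gap exactly where you flag it: you reduce everything to showing that $f_i:=|\la x_i|y_i\ra|^2$ is constant in $i$, and then say only that you ``would combine the decomposability constraint with the off-diagonal master relation and second-moment identities.'' That is not a proof, and the spectral/decomposability picture you sketch (writing $\sqrt{w_i}\,|x_i\ra|y_i\ra=R^{1/2}|q_i\ra$) does not by itself single out the constant solution. The paper closes this gap by producing \emph{two independent} scalar relations between $w_j$ and $f_j$ for each fixed $j$, and solving them. Concretely: from your own sandwiching identity $\sum_i w_i|\la x_k|y_i\ra|^2\,\pi_i=\tfrac{t}{d}\pi_k+\tfrac{1-t}{d^2}I_d$ (take $|z\ra=|x_k\ra$ on the second factor), once you have shown $\{\pi_i\}$ is a basis you may compare coefficients after writing $I_d=d\sum_i w_i\pi_i$; this immediately gives $|\la x_k|y_i\ra|^2=\tfrac{1-t}{d}$ for $i\neq k$ and the first relation $f_k=\tfrac{t}{dw_k}+\tfrac{1-t}{d}$. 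Your diagonal master relation $\tfrac{1}{w_k}=\alpha+\beta f_k$ with $R^{-1}=\alpha I+\beta U_{SW}$ is a genuinely different linear constraint (the slopes $d/t$ and $\beta$ do not agree for $t\neq 1$), so the two together force $f_k=\tfrac{t(d^2-1)+1}{d}$ and $w_k=\tfrac{1}{d^2}$. In the paper the second relation is obtained instead by right-multiplying \eqref{eq:muf-1} by $U_{SW}$ and repeating the partial-trace trick with $B=|y_j\ra\la x_j|$, using linear independence of $\{|y_i\ra\la x_i|\}$; your $R^{-1}$ route is a clean alternative, but only once you actually combine it with the first relation rather than abandoning the sandwiching identity after extracting informational completeness. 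Note also that your $R^{-1}$ argument needs $t\neq\tfrac{1}{d+1}$ (you defer to \cite{PPPR} there), whereas the paper's argument treats that endpoint uniformly and instead has to handle $t=-\tfrac{1}{d^2-1}$ separately.

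There is a second, smaller gap at the end. Your off-diagonal master relation gives $\la x_i|x_j\ra\la y_i|y_j\ra=\tfrac{td}{1-t}\,\la x_i|y_j\ra\la y_i|x_j\ra$, which yields only the modulus $|\la x_j|x_i\ra\la y_i|y_j\ra|=|t|$; it links the product to \emph{off-diagonal} cross overlaps whose phases are not controlled by the normalization $\la x_i|y_i\ra\ge 0$. (Indeed, $\la x_j|x_i\ra\la y_i|y_j\ra$ is invariant under simultaneous rephasing $|x_i\ra\mapsto e^{i\theta_i}|x_i\ra$, $|y_i\ra\mapsto e^{i\theta_i}|y_i\ra$, so you cannot fix its argument by rephasing.) The paper's $U_{SW}$-contraction yields the sharper identity $\la x_j|x_i\ra\la y_i|y_j\ra=\tfrac{td}{t(d^2-1)+1}\,\la x_j|y_j\ra\la y_i|x_i\ra$, linking the product to \emph{diagonal} overlaps; once those are made real and positive this gives $\la x_j|x_i\ra\la y_i|y_j\ra=t$ on the nose. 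You will need that extra identity (or an equivalent) to upgrade \eqref{eq:muf-3} from a modulus statement to the claimed signed equality.
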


\begin{proof}
Note that since Eq. \ref{eq:muf-1} doesn't depend on the phases of vectors $\{|x_i\ra\}, \{|y_i\ra\}$ we are free to choose any. 

We proceed similarly to the proof of Theorem 4 in \cite{Scott}. Let's denote $\pi_i = |x_i\ra\la x_i|$, $\rho_i = |y_i\ra\la y_i|$. 

By taking the partial traces of Eq. \ref{eq:muf-1} we get
\ea{\label{eq:muf-4}
\sum_{i=1}^{d^2} w_i \pi_i = \sum_{i=1}^{d^2} w_i \rho_i = \frac{1}{d}I_d,
}

since $\tr_1(U_{SW}) = \tr_1( \sum_{ij} |i\ra \la j| \otimes |j\ra\la i|) = \sum_{ij} \la j|i\ra |j\ra \la i| = I_d$, similarly $\tr_2(U_{SW}) = I_d$. 
Thus  $\{|x_i\ra\}$ and $\{|y_i\ra\}$ are weighted tight frames. 

For any matrix $A \in \mbb{C}^{d\times d}$ we can multiply Eq. \ref{eq:muf-1} by $A \otimes I$ and take the partial trace $\tr_1$. 
This gives us
\ea{\label{eq:muf-5}
\sum_{i=1}^{d^2} w_i \tr(\pi_iA)\rho_i = \frac{t}{d}A + \frac{1-t}{d^2}\tr(A)I_d, 
}
since $\tr_1(U_{SW} \cdot A\otimes I) = \tr_1( \sum_{ij} |i\ra \la j|A \otimes |j\ra\la i|) = \sum_{ij} \tr\big(A\cdot (|j\ra\la i|)^\dag \big)|j\ra\la i| = A$.

For $t \neq 0$ we can see that $A$ is a linear combination of $\rho_i$. Thus the set $\{\rho_i\}$ is informationally complete. The same can be deduced for the set $\{\pi_i\}$. By using the tightness of $\{\rho_i\}$ (Eq. \ref{eq:muf-4}) we can deduce the reconstruction formula 
\ea{\label{eq:muf-6}
A = \frac{d}{t}\bigg( \sum_{i=1}^{d^2}w_i (\tr(\pi_i A) - \frac{1-t}{d}\tr(A)) \rho_i\bigg).
}

Now let's substitute $A$ for $\rho_j$ for all $j=1,\dots,d^2$. We get
\ea{\label{eq:muf-7}
\rho_j = \frac{d}{t}\bigg( \sum_{i=1}^{d^2}w_i (\tr(\pi_i \rho_j) - \frac{1-t}{d}) \rho_i\bigg).
}

Since the set $\{\rho_i\}_{i=1}^{d^2}$ is informationally complete it's also linearly independent in the space of all matrices. 
Thus from Eq.\ref{eq:muf-7} we deduce that for any $i \neq j$:
\ea{\label{eq:muf-8}
\tr(\pi_i \rho_j) = |\la x_i|y_j\ra|^2 = \frac{1-t}{d},
}
\ea{\label{eq:muf-9}
\tr(\pi_j \rho_j) = |\la x_j|y_j\ra|^2 = \frac{t}{dw_j} + \frac{1-t}{d}.
}

At this point for $t=-\frac{1}{d^2-1}$ we can prove that all weights $w_i$ must be equal. Indeed, $|\la x_j|y_j\ra|^2 \geq 0$, thus 
$-\frac{1}{(d^2-1)dw_j} + \frac{d^2}{(d^2-1)d} \geq 0$, which gives $w_j \geq \frac{1}{d^2}$ for all $j$. Since $\sum_j w_j =1$ we deduce that $w_j = \frac{1}{d^2}$ for all $j$. 

Let's consider our starting equation Eq. \ref{eq:muf-1} and multiply it from the right by $U_{SW}$. We get
\ea{\label{eq:muf-10}
\sum_{i=1}^{d^2} w_i |x_i\ra\la y_i| \otimes |y_i\ra\la x_i| 
= \frac{t}{d}I_{d^2} +\frac{1-t}{d^2}U_{SW}. 
} 

By the way, if we multiply it again by $U_{SW}$ from the left we obtain Eq. \ref{eq:muf-1} but with swapped symbols $x,y$. This means that if a pair of frames is a solution to Eq. \ref{eq:muf-1} then the swapped pair is also a solution.

Now we use our previous trick. For any matrix $B \in \mbb{C}^{d\times d}$ multiply Eq.\ref{eq:muf-10} by $B \otimes I$ and take the partial trace $\tr_1$. This gives us 
\ea{\label{eq:muf-11}
\sum_{i=1}^{d^2} w_i \tr(|x_i\ra\la y_i| B) |y_i\ra\la x_i| = \frac{t}{d}\tr(B)I_d + \frac{1-t}{d^2}B. 
}

From substitution $B=I_d$ we get

\ea{\label{eq:muf-12}
\sum_{i=1}^{d^2} w_i \la y_i | x_i\ra \cdot  |y_i\ra\la x_i| = \frac{t(d^2-1) + 1}{d^2}I_d. 
}

It's not hard to see that the map $\frac{t}{d}\tr(B)I_d + \frac{1-t}{d^2}B$ is invertible unless $t=-\frac{1}{d^2-1}$, in which case the image of this map is the subspace of traceless matrices. 

Thus, for $t \neq -\frac{1}{d^2-1}$ the set $\{|y_i\ra\la x_i|\}_{i=1}^{d^2}$ is linearly independent in the space of all matrices, otherwise its linear span has dimension $d^2-1$.

Using the substitutions $B=|y_j\ra\la x_j|$ in Eq. \ref{eq:muf-11} for all $j$ we obtain 
\ea{\label{eq:muf-13}
\sum_{i=1}^{d^2} w_i \la x_j|x_i\ra\la y_i|y_j\ra \cdot |y_i\ra\la x_i| = \frac{t}{d}\la x_j|y_j\ra I_d + \frac{1-t}{d^2}|y_j\ra\la x_j|. 
}

For $t \neq -\frac{1}{d^2-1}$ we can use the expression for $I_d$ from Eq. \ref{eq:muf-12} to deduce
\ea{\label{eq:muf-14}
\sum_{i=1}^{d^2} w_i \big( \la x_j|x_i\ra\la y_i|y_j\ra - \frac{td}{t(d^2-1)+1}\la x_j|y_j\ra\la y_i | x_i\ra  \big) \cdot |y_i\ra\la x_i| 
=  \frac{1-t}{d^2}|y_j\ra\la x_j|.
}
From linear independence of $\{|y_i\ra\la x_i|\}$ we deduce that for any $i \neq j$:
\ea{\label{eq:muf-15}
\la x_j|x_i\ra\la y_i|y_j\ra = \frac{td}{t(d^2-1)+1}\la x_j|y_j\ra\la y_i | x_i\ra,
}
\ea{\label{eq:muf-16}
w_j (1 - \frac{td}{t(d^2-1)+1} |\la x_j|y_j\ra|^2) =  \frac{1-t}{d^2}.
}

Recall that $|\la x_j|y_j\ra|^2 = \frac{t}{dw_j} + \frac{1-t}{d}$ from Eq. \ref{eq:muf-9}. 

Let's denote 
\ea{\label{eq:muf-17}
c_j = \frac{td}{t(d^2-1)+1} |\la x_j|y_j\ra|^2.
} 

If $c_j > t$ then from Eq.\ref{eq:muf-16} we have $w_j = \frac{1-t}{d^2} \cdot \frac{1}{1-c_j} > \frac{1-t}{d^2} \cdot \frac{1}{1-t} = \frac{1}{d^2}$. 
On the other hand, from Eq. \ref{eq:muf-9} we have $\frac{1}{dw_j} =  \frac{1}{t}|\la x_j|y_j\ra|^2 - \frac{1-t}{td} = c_j \frac{t(d^2-1)+1}{t^2d} - \frac{1-t}{td} > \frac{t(d^2-1)+1}{td} - \frac{1-t}{td} = d$, 
hence $w_j < \frac{1}{d^2}$ – a contradiction. A similar contradiction we get if $c_j < t$. 
Thus, the only option left is $c_j = t$ and $w_j = \frac{1}{d^2}$ for any $j$. 

Let's choose the phases of $\{|x_i\ra\}, \{|y_i\ra\}$ such that $\la x_i|y_i\ra$ is real and non-negative for every $i$. 
From $c_j = t$ and Eq. \ref{eq:muf-15} we then deduce that for any $i \neq j$:
\ea{\label{eq:muf-18}
\la x_j|y_j\ra = \sqrt{\frac{t(d^2-1)+1}{d}},
} 
\ea{\label{eq:muf-19}
\la x_j|x_i\ra\la y_i|y_j\ra = t.
}

Finally, let's consider the remaining case $t = -\frac{1}{d^2-1}$. 
Recall that in this case we've already showed that $w_i$ must be equal $\frac{1}{d^2}$. 
Moreover, from Eq. \ref{eq:muf-9} we get $\la x_j|y_j\ra=0$ for any $j$. 

Now, for any $i \neq j$ denote $b_{ij} = \la x_j|x_i\ra\la y_i|y_j\ra$ and $b_{jj} = t$. 
Then Eq. \ref{eq:muf-13} can be rewritten as 
\ea{\label{eq:muf-20}
\sum_{i=1}^{d^2} b_{ij} |y_i\ra\la x_i| = 0. 
}
Let $M$ be a matrix with the entries $b_{ij}$. Note that $M$ is self-adjoint. 
Since the set $\{ |y_i\ra\la x_i| \}$ spans the subspace of dimension $d^2-1$, 
the rank of matrix $M$ must not exceed 1. In particular, every $2 \times 2$ minor must be equal 0.
Hence $|b_{ij}|^2=t^2$ for any $i\neq j$. 

It's easy to see that we can choose the phases of $\{|x_i\ra\}, \{|y_i\ra\}$ in such a way that $b_{12} = b_{13} = \dots = b_{1d^2} = t$. 
Hence any entry of the matrix $M$ must be equal $t$, so that we obtain the same equations Eq. \ref{eq:muf-18} and Eq. \ref{eq:muf-19} for this corner case. 

\end{proof}

A particular reverse implication is also true as we show in the next theorem. 

\begin{theorem}
Let $-\frac{1}{d^2-1} \le t \le \frac{1}{d+1}$ and suppose that we have two tight informationally complete unit frames $\{|x_i\ra\}_{i=1}^{d^2}, \{|y_i\ra\}_{i=1}^{d^2} \subset \mbb{C}^d$ such that for any $i \neq j$ it holds that
\ea{\label{eq:muf-21}
|\la x_i|y_j\ra|^2 = \frac{1-t}{d}, \quad |\la x_i|y_i\ra|^2 = \frac{t(d^2-1)+1}{d}.
}
Then 
\ea{\label{eq:muf-22}
\frac{1}{d^2}\sum_{i=1}^{d^2} |x_i\ra\la x_i| \otimes |y_i\ra\la y_i| 
= \frac{t}{d}U_{SW} +\frac{1-t}{d^2}I_{d^2}. 
}
\end{theorem}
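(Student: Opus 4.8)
The plan is to prove the operator identity \eqref{eq:muf-22} by reducing it to a statement about partial contractions. Write $\pi_i = |x_i\ra\la x_i|$ and $\rho_i = |y_i\ra\la y_i|$, and set $M = \frac{1}{d^2}\sum_i \pi_i \otimes \rho_i$ and $N = \frac{t}{d}U_{SW} + \frac{1-t}{d^2}I_{d^2}$, so that the goal is $M = N$. The key mechanism is the elementary fact that an operator $X$ on $\mbb{C}^d \otimes \mbb{C}^d$ is uniquely determined by its first-factor contractions $A \mapsto \tr_1\big((A\otimes I_d)X\big)$: taking $A = |s\ra\la r|$ recovers every matrix entry of $X$, so if these contractions vanish for all $A$ then $X=0$. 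Hence it suffices to verify $\tr_1\big((A\otimes I_d)M\big) = \tr_1\big((A\otimes I_d)N\big)$ for all $A \in \mbb{C}^{d\times d}$, and by linearity it is enough to check this on any spanning set of $\mbb{C}^{d\times d}$.

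The choice of spanning set is forced by what I expect to be the main structural obstacle: the hypotheses \eqref{eq:muf-21} constrain only the \emph{cross} overlaps $|\la x_i|y_j\ra|^2$ and say nothing directly about $|\la x_i|x_j\ra|$ or $|\la y_i|y_j\ra|$. Consequently one cannot usefully contract $M$ against same-type projectors; instead one must test against the \emph{opposite} frame, so that the partial trace produces the quantities $\tr(\rho_j\pi_i)=|\la y_j|x_i\ra|^2$ that the hypotheses actually control. This is precisely where informational completeness of both frames is used: the frame $\{\rho_j\}$ spans $\mbb{C}^{d\times d}$ and therefore serves as the test set, while the overlap relations govern the other frame. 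I will therefore substitute $A = \rho_j$ for $j = 1,\dots,d^2$.

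Carrying out the computation, the left-hand side becomes $\tr_1\big((\rho_j\otimes I_d)M\big) = \frac{1}{d^2}\sum_i |\la y_j|x_i\ra|^2\,\rho_i$. Splitting off the $i=j$ term, invoking \eqref{eq:muf-21}, and using the unit-norm tight-frame identity $\sum_i \rho_i = dI_d$ (which holds because there are $d^2$ unit vectors in $\mbb{C}^d$), the coefficient of $\rho_j$ collapses to $\frac{t(d^2-1)+1}{d}-\frac{1-t}{d}=td$, and the whole expression reduces to $\frac{1-t}{d^2}I_d + \frac{t}{d}\rho_j$. On the right-hand side, using $\tr_1\big((A\otimes I_d)U_{SW}\big)=A$ and $\tr_1(A\otimes I_d)=\tr(A)I_d$, as in the derivation of \eqref{eq:muf-5}, one gets $\tr_1\big((\rho_j\otimes I_d)N\big)=\frac{t}{d}\rho_j+\frac{1-t}{d^2}I_d$, which matches. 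Since this agreement holds for every $j$ and the $\rho_j$ span $\mbb{C}^{d\times d}$, the two contractions coincide for all $A$, and the injectivity noted above yields $M=N$, that is \eqref{eq:muf-22}.

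I do not anticipate a genuine obstacle beyond careful bookkeeping: one must keep the normalizations consistent (the prefactor $\frac{1}{d^2}$ in $M$, the factor $dI_d$ in the tightness relation, and the split of $N$ into its $U_{SW}$ and $I_{d^2}$ parts), and confirm that the arithmetic identity $\frac{t(d^2-1)+1}{d}-\frac{1-t}{d}=td$ produces exactly the coefficient appearing in $N$. A pleasant feature of this route is that it is uniform in $t$: because we never divide by $t$ or by $t(d^2-1)+1$, the endpoint $t=-\frac{1}{d^2-1}$ needs no separate treatment, in contrast to the forward direction in Theorem \ref{th:muf-1}. By the symmetry $x \leftrightarrow y$ one could equally contract the second tensor factor against $\pi_j$, which only confirms that the two frames enter the argument on an equal footing.
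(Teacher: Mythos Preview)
Your proof is correct and rests on the same core idea as the paper's --- verify the operator identity by contracting against projectors from the \emph{opposite} frame so that the cross-overlap hypotheses \eqref{eq:muf-21} are brought into play --- but the execution is genuinely different and somewhat cleaner. The paper tests both sides of \eqref{eq:muf-22} in the full Hilbert--Schmidt inner product against $\rho_k\otimes\pi_l$ for all pairs $(k,l)$; this uses informational completeness of \emph{both} frames and leads to a case split $k\neq l$ versus $k=l$, each with its own arithmetic. Your route instead applies the partial contraction $\tr_1\big((\rho_j\otimes I)\,\cdot\,\big)$, which reduces the $d^2\times d^2$ identity to a one-parameter family of $d\times d$ identities; this needs only tightness and informational completeness of the $y$-frame, and the arithmetic collapses to the single check $\frac{t(d^2-1)+1}{d}-\frac{1-t}{d}=td$. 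What the paper's version buys is complete symmetry between the two frames at every step; what your version buys is a shorter computation and a slightly weaker hypothesis (one frame can be arbitrary). Both arguments are uniform in $t$, so neither needs special handling at the endpoints.
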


\begin{proof}
Denote $\pi_i = |x_i\ra\la x_i|$, $\rho_i = |y_i\ra\la y_i|$.  Tightness condition means that $\sum_{i=1}^{d^2} \pi_i = \sum_{i=1}^{d^2} \rho_i = dI_d$. Informational completeness implies that the set $\{ \rho_i \otimes \pi_j \}$ spans the set of all $d^2 \times d^2$ matrices. Thus, it's enough to show that both parts of Eq. \ref{eq:muf-22} has the same Hilbert-Schmidt inner product with $\rho_k \otimes \pi_l$ for every $k,l \in \{1, \dots, d^2\}$. Let's verify. 

For $k\neq l$ we have

$$
\tr\Big( \frac{1}{d^2}\sum_{i=1}^{d^2} \pi_i \otimes \rho_i \cdot \rho_k \otimes \pi_l \Big) 
= \frac{1}{d^2}\sum_{i=1}^{d^2} \tr(\pi_i\rho_k)\tr(\rho_i\pi_l) =
$$
$$
= \frac{1}{d^2}\bigg( (d^2-2)\frac{1-t}{d} \cdot \frac{1-t}{d} + 2 \frac{1-t}{d} \cdot \frac{t(d^2-1)+1}{d}\bigg) =
$$
\ea{
= \frac{1-t}{d^4}\bigg( (d^2-2)(1-t) + 2 (t(d^2-1)+1) \bigg) 
= \frac{1-t}{d^4}(td^2 + d^2) = \frac{1-t^2}{d^2}.
}

On the other hand, 
$$
\tr\bigg( (\frac{t}{d}U_{SW} + \frac{1-t}{d^2}I_{d^2}) \cdot (\rho_k \otimes \pi_l) \bigg) =
$$
\ea{
= \frac{t}{d}\tr( |x_l\ra|y_k\ra \la y_k| \la x_l |) + \frac{1-t}{d^2} = \frac{t}{d}\cdot \frac{1-t}{d} + \frac{1-t}{d^2} = \frac{1-t^2}{d^2}. 
}

Similarly, for $k=l$ we have
$$
\tr\Big( \frac{1}{d^2}\sum_{i=1}^{d^2} \pi_i \otimes \rho_i \cdot \rho_k \otimes \pi_k \Big) 
= \frac{1}{d^2}\bigg( (d^2-1)\Big(\frac{1-t}{d}\Big)^2 + \Big(\frac{t(d^2-1)+1}{d}\Big)^2 \bigg) =
$$
\ea{
= \frac{1}{d^4}\bigg( (d^2-1)(t^2-2t+1) + t^2(d^2-1)^2 + 2t(d^2-1) + 1 \bigg) 
= \frac{t^2(d^2-1)+1}{d^2}.
}

On the other hand, 
$$
\tr\bigg( (\frac{t}{d}U_{SW} + \frac{1-t}{d^2}I_{d^2}) \cdot (\rho_k \otimes \pi_k) \bigg) =
$$
\ea{
= \frac{t}{d}\tr( |x_k\ra|y_k\ra \la y_k| \la x_k |) + \frac{1-t}{d^2} 
= \frac{t}{d}\cdot \frac{t(d^2-1)+1}{d} + \frac{1-t}{d^2}  
= \frac{t^2(d^2-1)+1}{d^2}. 
}

\end{proof}

These two theorems give us the following corollary 

\begin{corollary}
For any $t \in [-\frac{1}{d^2-1}, \frac{1}{d+1}] \setminus \{0\}$ the equality $\ebr(\Phi_t) = d^2$ is equivalent to the existence of a pair of mutually unbiased frames that appear in theorem 1.  
\end{corollary}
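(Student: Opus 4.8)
The plan is to deduce the corollary directly from the two preceding theorems together with the identifications established in Section~\ref{sec:ebr}. First I would recall that, by definition, $\ebr(\Phi_t) = \len(\mcl{C}(\Phi_t)) = \len(\mcl{C}_t)$, and that partial transposition $\mcl{I}\otimes T$ is an involution that neither creates nor destroys separability, so it leaves the separability length unchanged: $\len(\mcl{C}_t) = \len\big((\mcl{I}\otimes T)(\mcl{C}_t)\big)$. Hence the whole corollary reduces to showing that $\len\big((\mcl{I}\otimes T)(\mcl{C}_t)\big) = d^2$ is equivalent to the existence of a MUF pair, the two implications being supplied by the two theorems.

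For the forward implication I would assume $\ebr(\Phi_t)=d^2$. Then $(\mcl{I}\otimes T)(\mcl{C}_t)$ admits a pure product decomposition into exactly $d^2$ terms; written out, this is precisely the hypothesis Eq.~\ref{eq:muf-1} of Theorem~\ref{th:muf-1}, whose use requires exactly the condition $t\neq 0$ that appears in the corollary. Applying that theorem produces tight, informationally complete unit frames $\{|x_i\ra\}_{i=1}^{d^2}$, $\{|y_i\ra\}_{i=1}^{d^2}$ obeying the MUF relations Eq.~\ref{eq:muf-2}--Eq.~\ref{eq:muf-3}, which is the desired pair.

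For the backward implication I would start from a MUF pair as delivered by Theorem~\ref{th:muf-1}; in particular it satisfies the magnitude conditions Eq.~\ref{eq:muf-21} and is tight and informationally complete, so the hypotheses of the second theorem are met. That theorem then gives Eq.~\ref{eq:muf-22}, exhibiting $(\mcl{I}\otimes T)(\mcl{C}_t) = \frac{t}{d}U_{SW} + \frac{1-t}{d^2}I_{d^2}$ as the equal-weight convex combination $\frac{1}{d^2}\sum_{i=1}^{d^2}|x_i\ra\la x_i|\otimes|y_i\ra\la y_i|$ of $d^2$ pure product states. This yields $\len\big((\mcl{I}\otimes T)(\mcl{C}_t)\big)\le d^2$, and combining it with the universal lower bound $\len\big((\mcl{I}\otimes T)(\mcl{C}_t)\big)\ge d^2$ recorded in Section~\ref{sec:ebr} forces equality, i.e.\ $\ebr(\Phi_t)=d^2$.

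The content of the corollary is carried almost entirely by the two theorems, so I do not expect a genuine obstacle; the only point requiring care is that the two theorems dovetail correctly. Theorem~\ref{th:muf-1} outputs the sharp phase-normalized value $\la x_i|y_i\ra=\sqrt{(t(d^2-1)+1)/d}$, whereas the second theorem requires only the corresponding magnitude $|\la x_i|y_i\ra|^2$, so the hypotheses align with room to spare. One should also confirm that the lower bound $\len\ge d^2$ is valid throughout the punctured interval, which holds because for every such $t$ at least one of $\mcl{C}_t$ and $(\mcl{I}\otimes T)(\mcl{C}_t)$ has full rank $d^2$.
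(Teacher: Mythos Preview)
Your proposal is correct and matches the paper's approach: the paper states the corollary immediately after the two theorems with no further argument beyond ``These two theorems give us the following corollary,'' and you have simply spelled out the connecting steps (the identification $\ebr(\Phi_t)=\len(\mcl{C}_t)=\len((\mcl{I}\otimes T)(\mcl{C}_t))$ and the rank lower bound) that the paper leaves implicit from Section~\ref{sec:ebr}.
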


Thus, the PPPR conjecture is equivalent to the existence of such pairs of frames. 
For $t=\frac{1}{d+1}$ the two frames should coincide and be equal to a SIC solution.

It was proved in \cite{PPPR} that $\ebr(\Phi_t) = d^2$ for $t \in [0,\frac{1}{d+1}]$ in dimensions 2 and 3.
Their proof is constructive and one can use Lemma 8 of \cite{Mixon} to derive an explicit family of MUFs, 
which is continuous over $t$ on $[0,\frac{1}{d+1}]$, with a nuance that at $t=0$ one frame from the pair is not informationally complete. 
Also, these MUFs are Weyl-Heisenberg group covariant (we concern WH covariance in the next section). 

Even though the existence of MUFs in every dimension may look plausible, our numerical searches were not successful in dimensions 4 and 5 for values of $t$ other than $0$ or $\frac{1}{d+1}$. In dimension 4 one can try to use the construction of Groebner bases to find exact solutions or disprove the conjecture (e.g. by using the methods of \cite{Grassl}), but our attempts showed that this still requires an excessive amount of computing power. 

\section{The case where $t=0$}
\label{sec:t0}

In this section we investigate the situation where $t=0$. 

As we already mentioned, in this case $\ebr(\Phi_0) = d^2$ and for the corresponding decomposition of $\ch(\Phi_0) = \frac{1}{d^2}I_{d^2}$ we can take $I_{d^2} = \frac{1}{d^2}\sum_{k=0}^{d-1} \sum_{l=0}^{d-1} |k\ra\la k| \otimes | l \ra\la l |$. One can see that this is an example of why Theorem \ref{th:muf-1} doesn't hold for $t=0$. 

But there are many other decompositions in this case. For real weights $w_i \ge 0$ that sum to $1$ and unit vectors $\{|x_i\ra\}_{i=1}^{d^2}, \{|y_i\ra\}_{i=1}^{d^2} \subset \mbb{C}^d$ if we have 

\ea{\label{eq:t0-1}
\sum_{i=1}^{d^2} w_i |x_i\ra\la x_i| \otimes |y_i\ra\la y_i| 
= \frac{1}{d^2}I_{d^2}, 
}
then $w_i = \frac{1}{d^2}$ and $\{|x_i\ra|y_i\ra\}$ can be any orthonormal basis of $\mbb{C}^{d} \otimes \mbb{C}^{d}$. 
As an example of this we can take any two orthonormal bases $\{|a_k\ra\}_{k=1}^d$, $\{|b_l\ra\}_{l=1}^d$ of $\mbb{C}^d$ 
and set $|x_{(k,l)}\ra = |a_k\ra$, $|y_{(k,l)}\ra = |b_l\ra$, where we switched to indexing over $\{1,\dots,d\}^2$. 
It's an interesting question to find a concrete description of the set of bases of $\mbb{C}^{d} \otimes \mbb{C}^{d}$, where every basis vector is a simple tensor.

A similar question one can ask about Weyl-Heisenberg group covariant solutions for Eq. \ref{eq:t0-1}.

\newpage 

The Weyl-Heisenberg group is a group of unitary matrices generated by $W_\mathbf{a} = \tau^{a_1 a_2} S^{a_1}C^{a_2}$ 
for every $\mathbf{a} = (a_1,a_2) \in \mbb{Z}^2$, where 
$$
	\tau = \exp(2\pi i \cdot \frac{d+1}{2d}), ~~~ \omega = \tau^2 = \exp(2\pi i \cdot \frac{1}{d}),
$$
\ea{\label{eq:t0-2}
	S = \sum_{i=0}^{d-1} |i+1\ra\la i |,  ~~~ C = \sum_{i=0}^{d-1} \omega^i |i \ra\la i |,
}
(we assume that $|d\ra = |0\ra$). 
For any $\mathbf{a},\mathbf{b} \in \mbb{Z}^2$ they satisfy relations 

$$
  W_\mathbf{a}^\dag = W_\mathbf{a}^{-1} = W_{-\mathbf{a}},
$$
$$
W_\mathbf{a} W_\mathbf{b} = \tau^{\la \mathbf{a},\mathbf{b} \ra} W_\mathbf{a+b},
$$
\ea{\label{eq:t0-3}
	W_{\mathbf{a} + d \mathbf{b}} = \tau^{d\la \mathbf{a},\mathbf{b}\ra} W_\mathbf{a}, 
}
where the symplectic form $\la \mathbf{a},\mathbf{b} \ra$ is defined
by
\begin{equation}\label{eq:t0-4}
  \la \mathbf{a},\mathbf{b} \ra = a_2b_1 - b_2a_1 = - \la
  \mathbf{b},\mathbf{a} \ra.
\end{equation}

See \cite{Appleby} for more details.

We say that a frame of size $d^2$ is WH covariant if it consists of vectors $W_\mathbf{a}|x\ra$ for $\mathbf{a} \in \{ 0, \dots, d-1\}^2$, 
where $|x\ra$ is a \textit{fiducial} unit vector. 
The strong version of Zauner's conjecture states that there is always a WH covariant SIC in any dimension. 

We're concerned about solutions of Eq. \ref{eq:t0-1} with $|x_{\mathbf{a}}\ra = W_{\mathbf{a}} |x\ra$ and $|y_{\mathbf{a}}\ra = W_{\mathbf{a}} |y\ra$, where we switched to indexing over $\mathbf{a} \in \{0,\dots,d-1\}^2$. 
One such solution has already appeared in \cite{OY}, Proposition 4.  
If $F$ is the discrete Fourier transform matrix, that is, $F = \frac{1}{\sqrt{d}} \sum_{kl} w^{kl} |k \ra\la l|$, then we can set $|x\ra = |0\ra$ and $|y\ra = F |0\ra$. 
It's not hard to check that $\{W_{\mathbf{a}} |0\ra \otimes  W_{\mathbf{a}} F|0\ra \}$ is an orthonormal basis of $\mbb{C}^{d} \otimes \mbb{C}^{d}$. 
Moreover, the pair of frames $\{W_{\mathbf{a}} |0\ra\}$, $\{W_{\mathbf{a}} F|0\ra \}$ satisfy the defining relations for a pair of MUF for $t=0$, with the exception of informational completeness. 

Although, it turns out that this "good looking" solution at $t=0$ isn't a good fit for constructing a continuation. 
Recall that in dimensions 2 and 3 the families of solutions found in \cite{PPPR} are WH covariant and continuous over $t$ on the interval $[0, \frac{1}{d+1}]$. Moreover, one can verify that those families of solutions are differentiable at $t=0$. 
We prove that this can't be the case if a continuous family of covariant solutions has $\{W_{\mathbf{a}} |0\ra\}$, $\{W_{\mathbf{a}} F|0\ra \}$ at $0$. 

\begin{proposition}\label{prop:t0-4}
Let $d$ be any dimension. Suppose that there exists a continuous WH covariant family of pairs of frames $\{W_{\mathbf{a}} |x(t)\ra\}$, $\{W_{\mathbf{a}} |y(t)\ra\}$ in $\mbb{C}^d$ for $t \in [0, \epsilon]$, $\epsilon >0$, that satisfy equation
\ea{\label{eq:t0-5}
\frac{1}{d^2} \sum_{\mathbf{a} \in \{0,\dots,d-1\}^2} W_\mathbf{a}^{\otimes 2} \big(|x(t)\ra\la x(t)| \otimes |y(t)\ra\la y(t)|\big) W_\mathbf{a}^{\otimes2 \dag} 
= \frac{t}{d}U_{SW} +\frac{1-t}{d^2}I_{d^2}. 
} 
Also, assume that $|x(0)\ra = |0 \ra$, $|y(0)\ra = F|0\ra$. 
Then such a family can't be differentiable at $t=0$. 

\end{proposition}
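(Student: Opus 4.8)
The plan is to assume, for contradiction, that the family is differentiable at $t=0$, pair the operator identity (\ref{eq:t0-5}) against a single well-chosen Weyl--Heisenberg operator, and reduce the whole statement to one scalar equation that violates the product rule.

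First I would fix an index $\mathbf{a}=(a_1,a_2)\in\{0,\dots,d-1\}^2$ with $a_1\neq 0$ \emph{and} $a_2\neq 0$; such an $\mathbf{a}$ exists precisely because $d\ge 2$ (for $d=1$ the claim is degenerate, the frames being forced constant). The point of this choice is that $W_{-\mathbf{a}}\otimes W_\mathbf{a}$ commutes with $W_\mathbf{c}^{\otimes 2}$ for every $\mathbf{c}$: by the Weyl relations (\ref{eq:t0-3}), conjugation by $W_\mathbf{c}$ multiplies $W_{-\mathbf{a}}$ and $W_\mathbf{a}$ by the phases $\omega^{\la\mathbf{c},\mathbf{a}\ra}$ and $\omega^{-\la\mathbf{c},\mathbf{a}\ra}$, which cancel in the tensor product. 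Multiplying both sides of (\ref{eq:t0-5}) by $W_{-\mathbf{a}}\otimes W_\mathbf{a}$ and taking the trace, this commutation lets me replace each conjugated term $W_\mathbf{c}^{\otimes 2}(\cdots)W_\mathbf{c}^{\otimes 2\dagger}$ by the untwirled projector, so the normalized sum over $\mathbf{c}$ collapses to a single term. The left-hand side becomes $\la x(t)|W_{-\mathbf{a}}|x(t)\ra\,\la y(t)|W_\mathbf{a}|y(t)\ra$, and on the right, using $\tr((A\otimes B)U_{SW})=\tr(AB)$ together with $W_{-\mathbf{a}}W_\mathbf{a}=I_d$ and $\tr(W_\mathbf{a})=0$ for $\mathbf{a}\neq 0$, I get exactly $t$. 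Thus
\[
\la x(t)|W_{-\mathbf{a}}|x(t)\ra\,\la y(t)|W_\mathbf{a}|y(t)\ra = t \qquad \text{for all } t\in[0,\epsilon].
\]

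Next I would evaluate the two factors at $t=0$ from the prescribed fiducials. Since $W_{-\mathbf{a}}|0\ra$ is a phase times $|{-}a_1\ra$, we have $\la 0|W_{-\mathbf{a}}|0\ra=0$ whenever $a_1\neq 0$; and since $F|0\ra$ is the uniform superposition, a direct evaluation gives $\la F0|W_\mathbf{a}|F0\ra=0$ whenever $a_2\neq 0$ (the clock part forces $a_2=0$ after the Fourier conjugation). So for our chosen $\mathbf{a}$ both factors vanish at $t=0$, matching the fact that the right-hand side of the scalar identity is $0$ there.

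Finally, writing $p(t)=\la x(t)|W_{-\mathbf{a}}|x(t)\ra$ and $q(t)=\la y(t)|W_\mathbf{a}|y(t)\ra$, differentiability of the frames at $t=0$ makes $p$ and $q$ differentiable there, since each is the trace pairing of a differentiable projector with a fixed operator. Differentiating $p(t)q(t)=t$ at $t=0$ gives $p'(0)q(0)+p(0)q'(0)=1$; but $p(0)=q(0)=0$, so the left-hand side is $0$, a contradiction. The only substantive steps are the phase cancellation that collapses the group average and the two fiducial matrix-element computations; I expect the bookkeeping of the Weyl phases in the commutation step to be the most error-prone, while the structural crux is the complementary support of $|0\ra$ and $F|0\ra$ in the Weyl basis, which is exactly what forces \emph{both} factors of $p(t)q(t)$ to vanish simultaneously at $t=0$ and thereby obstructs differentiability.
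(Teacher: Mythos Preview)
Your proof is correct and follows essentially the same approach as the paper's: both reduce the matrix identity to the scalar constraint $\tr(\pi_0 W_{-\mathbf{a}})\tr(\rho'_0 W_\mathbf{a})+\tr(\pi'_0 W_{-\mathbf{a}})\tr(\rho_0 W_\mathbf{a})=1$ and obtain a contradiction from the simultaneous vanishing of $\la 0|W_{-\mathbf{a}}|0\ra$ and $\la F0|W_\mathbf{a}|F0\ra$ when $a_1,a_2\neq 0$. The only difference is organizational: the paper differentiates the operator identity first and then uses Lemma~\ref{lem:t0-5} to read off the Weyl coefficients, whereas you pair against $W_{-\mathbf{a}}\otimes W_\mathbf{a}$ first (using the commutation with $W_\mathbf{c}^{\otimes 2}$ directly, which is exactly the content of that lemma restricted to a single component) to get the clean scalar identity $p(t)q(t)=t$ before differentiating.
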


To prove this we need the following lemma 

\begin{lemma}\label{lem:t0-5}
For any matrix $X \in \mbb{C}^{d^2 \times d^2} $ it holds that 
\ea{\label{eq:t0-6}
\sum_{\mathbf{a} \in \{0,\dots,d-1\}^2}  W_\mathbf{a}^{\otimes 2} X W_\mathbf{a}^{\otimes2 \dag}  
= \sum_{\mathbf{a} \in \{0,\dots,d-1\}^2} \tr(X W_\mathbf{a}^\dag \otimes W_\mathbf{a}) W_\mathbf{a} \otimes W_\mathbf{a}^\dag. 
} 

\end{lemma}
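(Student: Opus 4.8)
The plan is to exploit that the Weyl–Heisenberg operators form an orthogonal operator basis. First I would recall that as $\mathbf{a}$ ranges over $\{0,\dots,d-1\}^2$ the matrices $W_\mathbf{a}$ satisfy $\tr(W_\mathbf{a}^\dag W_\mathbf{b}) = d\,\delta_{\mathbf{a},\mathbf{b}}$, which follows from $W_\mathbf{a}^\dag W_\mathbf{b} = \tau^{\langle -\mathbf{a},\mathbf{b}\rangle} W_{\mathbf{b}-\mathbf{a}}$ together with the fact that $\tr(W_\mathbf{v}) = 0$ unless $\mathbf{v}\equiv \mathbf{0} \pmod d$. Consequently $\{W_\mathbf{b}\otimes W_\mathbf{c}\}$ is an orthogonal basis of $\mathbb{C}^{d^2\times d^2}$. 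Since both sides of Eq. \ref{eq:t0-6} are linear in $X$, it then suffices to verify the identity for $X = W_\mathbf{b}\otimes W_\mathbf{c}$ with $\mathbf{b},\mathbf{c}$ in the fundamental domain.

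The two computational tools I would set up next are a conjugation phase rule and a character sum. From the commutation relation in Eq. \ref{eq:t0-3} one extracts $W_\mathbf{a} W_\mathbf{v} W_\mathbf{a}^\dag = \omega^{\langle \mathbf{a},\mathbf{v}\rangle} W_\mathbf{v}$, and factoring the symplectic form of Eq. \ref{eq:t0-4} coordinatewise gives $\sum_{\mathbf{a}\in\{0,\dots,d-1\}^2}\omega^{\langle\mathbf{a},\mathbf{v}\rangle} = d^2$ when $\mathbf{v}\equiv \mathbf{0}\pmod d$ and $0$ otherwise. Applying the conjugation rule to each tensor factor yields $W_\mathbf{a}^{\otimes 2}(W_\mathbf{b}\otimes W_\mathbf{c})W_\mathbf{a}^{\otimes 2\dag} = \omega^{\langle\mathbf{a},\mathbf{b}+\mathbf{c}\rangle}\,W_\mathbf{b}\otimes W_\mathbf{c}$, so summing over $\mathbf{a}$ and invoking the character sum collapses the left-hand side to $d^2\,W_\mathbf{b}\otimes W_\mathbf{c}$ when $\mathbf{b}+\mathbf{c}\equiv \mathbf{0}\pmod d$ and to $0$ otherwise.

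For the right-hand side I would expand $\tr\big((W_\mathbf{b}\otimes W_\mathbf{c})(W_\mathbf{a}^\dag\otimes W_\mathbf{a})\big) = \tr(W_\mathbf{b}W_\mathbf{a}^\dag)\,\tr(W_\mathbf{c}W_\mathbf{a})$. The orthogonality relation forces $\mathbf{a}=\mathbf{b}$ in the first factor, while the second is nonzero only when $\mathbf{c}\equiv-\mathbf{b}\pmod d$; hence the right-hand side also vanishes unless $\mathbf{b}+\mathbf{c}\equiv\mathbf{0}$, matching the support found above. In the surviving case only the single term $\mathbf{a}=\mathbf{b}$ remains, equal to $d\,\tr(W_\mathbf{c}W_\mathbf{b})\,W_\mathbf{b}\otimes W_\mathbf{b}^\dag$, and it remains to check this equals $d^2\,W_\mathbf{b}\otimes W_\mathbf{c}$.

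The one genuinely delicate point, and the step I expect to require the most care, is the phase bookkeeping when reducing $-\mathbf{b}$ back into the fundamental domain. Writing $\mathbf{c} = -\mathbf{b}+d\mathbf{k}$ with $\mathbf{c}\in\{0,\dots,d-1\}^2$, the relation $W_{\mathbf{a}+d\mathbf{b}}=\tau^{d\langle\mathbf{a},\mathbf{b}\rangle}W_\mathbf{a}$ gives $W_\mathbf{c}=\lambda\, W_\mathbf{b}^\dag$ for an explicit root of unity $\lambda$. I would then observe that this same $\lambda$ reappears in $\tr(W_\mathbf{c}W_\mathbf{b}) = \lambda\,\tr(I) = \lambda d$, so the right-hand side becomes $d\cdot\lambda d\cdot W_\mathbf{b}\otimes W_\mathbf{b}^\dag = d^2\,W_\mathbf{b}\otimes(\lambda W_\mathbf{b}^\dag) = d^2\,W_\mathbf{b}\otimes W_\mathbf{c}$, precisely the left-hand side. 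The essential feature is that the trace coefficient on the right automatically carries exactly the phase needed to convert $W_\mathbf{b}^\dag$ back into $W_\mathbf{c}$, so no separate phase conventions need to be tracked; confirming this cancellation is the crux of the argument.
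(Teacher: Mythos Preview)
Your proposal is correct and follows essentially the same route as the paper: reduce by linearity to $X=W_\mathbf{b}\otimes W_\mathbf{c}$, evaluate the left side via the conjugation rule and the character sum $\sum_\mathbf{a}\omega^{\langle\mathbf{a},\mathbf{b}+\mathbf{c}\rangle}$, and evaluate the right side via the trace orthogonality of the Weyl--Heisenberg operators. The paper's write-up simply identifies $W_\mathbf{c}$ with $W_{-\mathbf{b}}=W_\mathbf{b}^\dag$ on both sides without comment, whereas you make the phase cancellation $W_\mathbf{c}=\lambda W_\mathbf{b}^\dag$, $\tr(W_\mathbf{c}W_\mathbf{b})=\lambda d$ explicit; this extra bookkeeping is a welcome clarification but not a different argument.
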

\begin{proof}
One can see that both sides are linear over $X$. 
Thus, it's enough to prove it for $X = W_\mathbf{b} \otimes W_\mathbf{c}$ for any $\mathbf{b}, \mathbf{c} \in \{0,\dots,d-1\}^2$. 

For the left side we have 
$$
\sum_{\mathbf{a} \in \{0,\dots,d-1\}^2}  W_\mathbf{a}^{\otimes 2} (W_\mathbf{b} \otimes W_\mathbf{c}) W_\mathbf{a}^{\otimes2 \dag}  
= \sum_{\mathbf{a} \in \{0,\dots,d-1\}^2} \omega^{\la a , b \ra} W_\mathbf{b} \otimes \omega^{\la a , c \ra} W_\mathbf{c} 
=
$$
\ea{\label{eq:t0-7}
= \Big(\sum_{\mathbf{a} \in \{0,\dots,d-1\}^2} \omega^{\la \mathbf{a} , \mathbf{b+c} \ra}\Big)  W_\mathbf{b} \otimes W_\mathbf{c} 
=  \begin{cases}
    0, \text{ if } \mathbf{b+c} \neq (0,0),
    \\ 
   d^2 W_\mathbf{b} \otimes W_\mathbf{-b}, \text{ if } \mathbf{b+c} = (0,0),
   \end{cases}
}
where the result of summation of $\omega$ powers we deduced from the fact that $\sum_{\mathbf{a}} W_\mathbf{a}W_\mathbf{z}W_\mathbf{a}^\dag = d \tr(W_\mathbf{z})$, which is $0$ unless $\mathbf{z}=(0,0)$. 

For the right side we have 
$$
\sum_{\mathbf{a} \in \{0,\dots,d-1\}^2} \tr(W_\mathbf{b} \otimes W_\mathbf{c}  \cdot W_\mathbf{a}^\dag \otimes W_\mathbf{a}) W_\mathbf{a} \otimes W_\mathbf{a}^\dag 
= \sum_{\mathbf{a} \in \{0,\dots,d-1\}^2} \tr(W_\mathbf{b}W_\mathbf{-a}) \tr(W_\mathbf{c}W_\mathbf{a}) W_\mathbf{a} \otimes W_\mathbf{a}^\dag
=
$$
\ea{\label{eq:t0-8}
= \begin{cases}
    d^2 W_\mathbf{a} \otimes W_\mathbf{-a} , \text{ if } \mathbf{a}=\mathbf{b} \text{ and } \mathbf{a}=\mathbf{-c},
    \\ 
   0, \text{ otherwise},
   \end{cases}
}
which is the same as for the left side. 
\end{proof}
Note that the expression Eq. \ref{eq:t0-6} is the scaled projection of the matrix $X$ onto the subspace spanned by the matrices $W_\mathbf{a} \otimes W_\mathbf{a}^\dag$ (which are orthogonal to each other). 

Let's prove Prop. \ref{prop:t0-4}.

\begin{proof}

Denote $\pi_0 = |x(0)\ra\la x(0)| = |0\ra\la0|$,  $\rho_0 = |y(0)\ra\la y(0)| = F|0\ra\la0|F^\dag$ and
assume that $\pi'_0 = \frac{\mrm{d}}{\mrm{d} t} |x(t)\ra\la x(t)| \mid_{t=0}$ and $\rho'_0 = \frac{\mrm{d}}{\mrm{d} t} |y(t)\ra\la y(t)| \mid_{t=0}$ exist (here we're considering the right derivative). 
Differentiation of Eq. \ref{eq:t0-5} gives us 

\ea{\label{eq:t0-9}
\frac{1}{d^2} \sum_{\mathbf{a} \in \{0,\dots,d-1\}^2} W_\mathbf{a}^{\otimes 2} \big( \pi_0 \otimes \rho'_0 + \pi'_0 \otimes \rho_0 \big) W_\mathbf{a}^{\otimes2 \dag} 
= \frac{1}{d}U_{SW} - \frac{1}{d^2}I_{d^2}.  
} 

From \cite{OY}, Lemma 1, one can deduce that $U_{SW} = \frac{1}{d} \sum_\mathbf{a} W_\mathbf{a} \otimes W_\mathbf{a}^\dag$, 
hence 
\ea{\label{eq:t0-10}
\frac{1}{d}U_{SW} - \frac{1}{d^2}I_{d^2} = \frac{1}{d^2} \sum_{\mathbf{a} \neq (0,0)} W_\mathbf{a} \otimes W_\mathbf{a}^\dag.
}

By using Lemma \ref{lem:t0-5} we conclude that $\tr(\pi_0 \otimes \rho'_0 + \pi'_0 \otimes \rho_0) = 0$ and for any $\mathbf{a} \neq (0,0)$:
\ea{\label{eq:t0-11}
	\tr\Big((\pi_0 \otimes \rho'_0 + \pi'_0 \otimes \rho_0) W_\mathbf{a}^\dag \otimes W_\mathbf{a}\Big) = 1.
}

But if $a_1 \neq 0$ and $a_2 \neq 0$ then for $\mathbf{a} = (a_1,a_2)$ we have $\tr(\pi_0 W_\mathbf{a}) = \la 0| W_\mathbf{a} |0\ra = 0$ and $\tr(\rho_0 W_\mathbf{a}) = \la 0| F^\dag W_\mathbf{a} F | 0 \ra = \la 0| W_{(a_2,-a_1)} | 0 \ra = 0$, hence $\tr\big((\pi_0 \otimes \rho'_0 + \pi'_0 \otimes \rho_0) W_\mathbf{a}^\dag \otimes W_\mathbf{a}\big) = 0$ – a contradiction. 

\end{proof}

You may ask what WH covariant solutions at $0$ evade this proof. 
Let's take $|x(0)\ra = |0\ra$ and $|y(0)\ra = \sum_i \gamma_i |i\ra$, where $|\gamma_i| = \frac{1}{\sqrt{d}}$, 
also we require that $\la y(0) | W_{(a_1,a_2)} |y(0)\ra \neq 0$ for any $a_1 \neq 0$ and any $a_2$.   
It's not hard to check that such $\gamma_i$ exist, 
the pair of frames $\{W_\mathbf{a}|x(0)\ra\}$, $\{W_\mathbf{a}|y(0)\ra\}$ satisfy Eq.\ref{eq:t0-1} and MUF relations for $t=0$, 
except the first frame is not informationally complete. 
The families of solutions found in \cite{PPPR} at $t=0$ have exactly this type. 

\section{Conclusions}

In this paper we showed that there is a deeper connection between Zauner's conjecture and the entanglement breaking rank of the quantum depolarizing channels. We hope that this could shed some light on the original conjecture, and on the research of entanglement breaking maps and their rank in general.

\section{Acknowledments}
The author is grateful to Vasyl Ostrovskyi for useful discussions, suggestions and support.


\begin{thebibliography}{99} 

\bibitem{Appleby} D. M. Appleby, {\it SIC-POVMs and the extended Clifford group}, 
J. Math. Phys. {\bf 46} (2005) 052107

\bibitem{Aligned} M. Appleby, I. Bengtsson, I. Dumitru, S. Flammia, \textit{Dimension towers of SICs. I. Aligned SICs and embedded tight frames}, J. Math. Phys. 58 (2017) 112201.

\bibitem{STFF} M. Appleby, I. Bengtsson, S. Flammia, D. Goyeneche, {\it Tight Frames, Hadamard Matrices and Zauner's Conjecture}, J. Phys. A 52 (2019) 295301, 

\bibitem{Yard} M. Appleby, S. Flammia, G. McConnell, J. Yard, \textit{SICs and algebraic number theory}, Found. Phys. 47 (2017) 1042

\bibitem{Bengtsson} I. Bengtsson, \textit{SICs: Some Explanations}, Found Phys 50, 1794–1808 (2020), https://doi.org/10.1007/s10701-020-00341-9

\bibitem{Choi} M. D. Choi, \textit{Completely positive linear maps on complex matrices}, Linear Algebra and
Appl. (1975), vol. 10, 285–290.

\bibitem{mub} T. Durt, B.G. Englert, I. Bengtsson, K. Zyczkowski, \textit{On mutually unbiased
bases}, Int. J. Quantum Inf. (2010), vol. 08(04), 535– 640.

\bibitem{Fuchs} C. A. Fuchs, M. C. Hoan, B. C. Stacey, {\it The SIC question: History 
and state of play},  Axioms \textbf{6} (2017) 21.

\bibitem{Grassl} M. Grassl, \textit{Computing equiangular lines in complex space}, Lecture Notes in Computer Science 5393
(2008) 89.

\bibitem{ebc} M. Horodecki, P. W. Shor, M. B. Ruskai, \textit{Entanglement breaking channels},
Rev. Math. Phys. (2003), vol. 15(6), 629– 641.

\bibitem{Mixon} J. W. Iverson, E. J. King, D. G. Mixon, \textit{A note on tight projective 2-designs}, J Combin Des. 2021; 29: 809– 832. https://doi.org/10.1002/jcd.21804

\bibitem{Rotteler} A. Klappenecker, M. Rotteler, \textit{Mutually unbiased bases are complex projective 2-designs}, Proceedings 2005 IEEE International Symposium on Information Theory, 2005, pp. 1740-1744.

\bibitem{Kopp} G. S. Kopp, \textit{SIC-POVMs and the Stark Conjectures}, Int. Math. Res. Not., 2021, 18, 13812–13838, https://doi.org/10.1093/imrn/rnz153

\bibitem{OY} O. Ostrovskyi, D. Yakymenko, \textit{Geometric properties of SIC-POVM tensor square}, to appear in Lett. Math. Phys., preprint arXiv:1911.05437.

\bibitem{PPPR} S. K. Pandey, V. I. Paulsen, J. Prakash, M. Rahaman, 
{\it Entanglement Breaking Rank and the existence of SIC-POVMs}, J. Math. Phys. {\bf 61}, 042203 (2020). 

\bibitem{Renes} J. M. Renes, R. Blume-Kohout, A. J. Scott, C. M. Caves, 
{\it Symmetric informationally complete quantum measurements}, J. Math. Phys. {\bf 45} (2004) 2171.

\bibitem{Scott} A. J. Scott, \textit{Tight informationally complete quantum measurements}, J. Phys. A 39, 13507 (2006)

\bibitem{Scott2} A. J. Scott, \textit{SICs: Extending the list of solutions}, 2017, arXiv:1703.03993

\bibitem{SG} A. J. Scott, M. Grassl, {\it SIC-POVMs: A new computer study}, J. Math. Phys. {\bf 51} (2010) 042203

\bibitem{Waldron} S. Waldron, {\it An Introduction to Finite Tight frames}, Birkhäuser Basel, 2018. 

\bibitem{Quaternions} S. Waldron, \textit{Tight frames over the quaternions and equiangular lines}, 2020, arXiv:2006.06126

\bibitem{Watrous} J. Watrous, \textit{The theory of quantum information}, Cambridge University Press, Cambridge
(2018)
 	
\bibitem{Zauner} G. Zauner, {\it Quantendesigns. Grundz\"uge einer 
nichtkommutativen Designtheorie}, PhD thesis, Univ. Wien 1999. Available 
in English translation in Int. J. Quant. Inf. {\bf 9} (2011) 445, \url{http://www.gerhardzauner.at/qdmye.html}.

\bibitem{Zhu} H. Zhu, \textit{SIC POVMs and Clifford groups in prime dimensions}, J. Phys. A 43 (2010) 305305.


\end{thebibliography}
\end{document}